\definecolor{awesome}{rgb}{0.93, 0.53, 0.18}
\definecolor{prettygreen}{RGB}{5,125,143}
\newcommand{\ot}{\otimes}
\newtheorem{theorem}{Theorem}
\newtheorem*{theorem*}{Theorem}
\newtheorem{lemma}{Lemma}
\newtheorem{corollary}{Corollary}
\begin{document}

\title{Catalytic Activation of Bell Nonlocality}

\author{Jessica Bavaresco}
\affiliation{Department of Applied Physics, University of Geneva, Geneva, Switzerland}
\affiliation{Sorbonne University, CNRS, LIP6, F-75005 Paris, France}

\author{Nicolas Brunner}
\affiliation{Department of Applied Physics, University of Geneva, Geneva, Switzerland}

\author{Antoine Girardin}
\affiliation{Department of Applied Physics, University of Geneva, Geneva, Switzerland}

\author{Patryk Lipka-Bartosik}
\affiliation{Department of Applied Physics, University of Geneva, Geneva, Switzerland}
\affiliation{Center for Theoretical Physics, Polish Academy of Sciences, Warsaw, Poland}
\affiliation{Institute of Theoretical Physics, Jagiellonian University, 30-348 Kraków, Poland}

\author{Pavel Sekatski}
\affiliation{Department of Applied Physics, University of Geneva, Geneva, Switzerland}

\begin{abstract} 
The correlations of certain entangled states can be perfectly simulated classically via a local model. Hence such states are termed Bell local, as they cannot lead to Bell inequality violation. Here, we show that Bell nonlocality can nevertheless be activated for certain Bell-local states via a catalytic process. Specifically, we present a protocol where a Bell-local state, combined with a catalyst, is transformed into a Bell-nonlocal state while the catalyst is returned exactly in its initial state. Importantly, this transformation is deterministic and based only on local operations. Moreover, this procedure is possible even when the state of the catalyst is itself Bell local, demonstrating a new form of superactivation of Bell nonlocality, as well as an interesting form of quantum catalysis. On the technical level, our main tool is a formal connection between catalytic activation and many-copy activation.
\end{abstract}

\maketitle

\section{Introduction}

Quantum entanglement enables strong forms of correlations, which have no equivalent in classical physics. This is the effect of quantum Bell nonlocality: by performing well-chosen local measurements on some entangled state, one obtains correlations that cannot be reproduced within any physical theory satisfying a natural condition of locality, as formalized by Bell~\cite{bell1964einstein,Brunner_2014}. Beyond the purely conceptual interest, these ideas have played a significant role in quantum information processing, notably for device-independent applications~\cite{Acin2007,Pironio2010,colbeck2011,supic2020self}.

From a fundamental perspective, a long-standing problem has been to understand precisely the relation between entanglement and Bell nonlocality
---specifically, whether every entangled state can generate Bell nonlocality. While this is always the case for pure entangled states~\cite{GISIN1991}, the problem is much more complicated for mixed states~\cite{werner1989quantum,Barrett_2002,acin2006grothendieck,augusiak2015entanglement,Bowles2016}. Indeed, there exist mixed entangled states that are ``Bell local'', i.e., their correlations can be simulated via a purely classical model (involving a shared random variable) for arbitrary local measurements~\cite{werner1989quantum,Barrett_2002}.

To make the question more intriguing, it was later realized that Bell nonlocality can be ``activated’’ in some cases, via an adequate processing. Of course, to ensure a fair accounting, it is critical that this processing alone cannot generate Bell nonlocality. Starting from certain Bell-local states $\rho$, Bell nonlocality can be activated via an initial local filtering of $\rho$~\cite{popescu1995bell,gisin1996hidden,Masanes2008,hirsch2013genuine}, or by processing multiple copies of $\rho$ via joint local measurements~\cite{palazuelos2012superactivation,cavalcanti2013all,Contreras2022}, within a quantum network~\cite{sen2005entanglement,Cavalcanti2011,cavalcanti2012nonlocalitytests}, or also by broadcasting $\rho$ locally~\cite{bowles2021single,Boghiu2023}. At the moment, it remains unclear for which entangled states Bell nonlocality can be activated, and what is the relation between these different activation scenarios. A key question is whether there exists a scenario where it would be possible to generate Bell nonlocality starting from any entangled state $\rho$. If possible, this would demonstrate an operational equivalence between entanglement and Bell nonlocality.

Here we investigate a novel scenario for activating Bell nonlocality via quantum catalysis~\cite{review1,review2}, a concept that has been proven useful in quantum information processing~\cite{Jonathan_1999,vanDam2003,turgut2007catalytic}, quantum thermodynamics~\cite{Brand_o_2015,Ng_2015,Wilming2017} as well as other areas~\cite{Aberg2014,Vaccaro2018,lostaglio2019coherence,takagi2022correlation,char2023catalytic}. We consider two distant parties, Alice and Bob, sharing a system initially prepared in an entangled, yet Bell-local, quantum state $\rho_{AB}$. We show that, by combining the initial system with an ancillary bipartite system, the \textit{catalyst}, and by allowing Alice and Bob to locally manipulate their subsystems, it is possible to transform the original system from a Bell-local into a \textit{Bell-nonlocal} state in a catalytic manner---meaning that the catalyst is recovered in exactly the same state as it was in initially. Importantly, this catalytic transformation is deterministic, based only on local operations, and does not involve any classical communication between Alice and Bob. Moreover, we show that this phenomenon is possible even when the catalyst is itself in a Bell-local state. Hence, this procedure constitutes a new form of \textit{superactivation} of Bell nonlocality, as neither the system nor the catalyst could have initially led to Bell inequality violation. We connect this phenomenon with that of many-copy activation of Bell nonlocality~\cite{palazuelos2012superactivation}, by showing that whenever a Bell-local entangled state can be activated with many copies, it can also be catalytically activated.

\begin{figure*}
    \centering
    \includegraphics[width=0.9\linewidth]{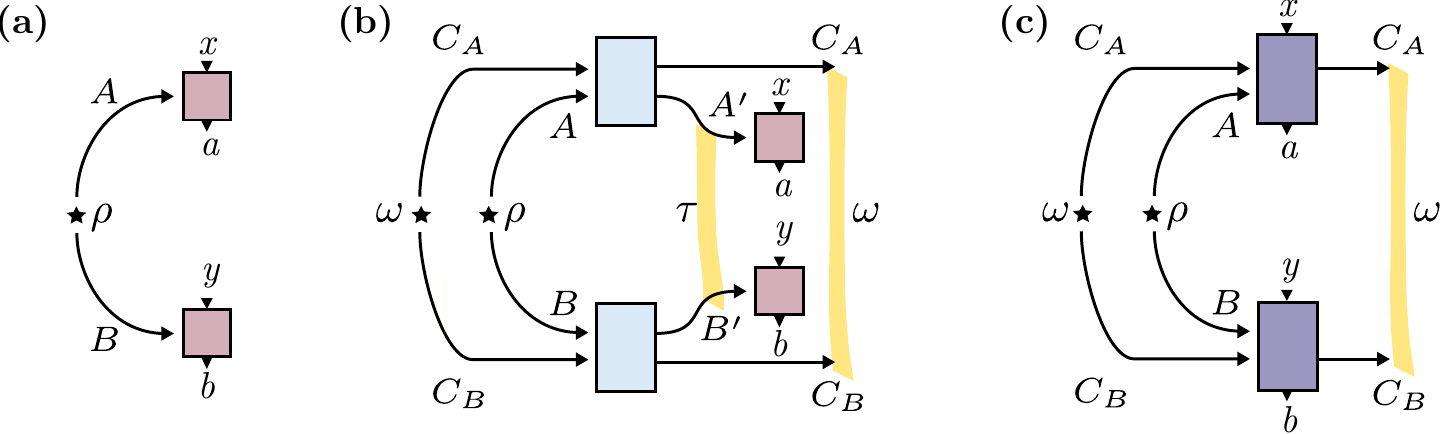}
    \caption{
    {\bf(a)} The standard Bell nonlocality scenario, where an entangled state $\rho_{AB}$ is shared between two parties, $A$ and $B$, who perform local measurements  to obtain the distribution $p(ab|xy)$. {\bf (b)} In this work, we discuss a scenario for catalytic activation of Bell nonlocality. Before being measured, the state $\rho_{AB}$ is catalytically transformed into $\tau_{A' B'}$ by means of deterministic local operations. That is, the joint state of the systems and the bipartite catalyst $\rho_{AB}\otimes \omega_{C_A C_B}$ is mapped to a global output state  $\tau_{A' B' C_A C_B}$ (by joint operations on $AC_A$ and $BC_B$), in such a way that the marginal state of the catalyst is returned unchanged $\tau_{C_A C_B}=\omega_{C_A C_B}$. We show that the output state of the measured systems $\tau_{A' B'}$ can be Bell nonlocal, even when starting from a state $\rho_{AB}$ that is Bell local, i.e., it cannot lead to nonlocality in scenario (a). {\bf (c)} A different variant of catalytic activation of Bell nonlocality, where the catalyst returned only after the local measurements.
    }
    \label{fig:1}
\end{figure*}

\section{Bell nonlocality}

The standard scenario investigated in quantum Bell nonlocality [see Fig.~\ref{fig:1}(a)] consists of two spatially separated and non-communicating parties, Alice and Bob, who share a bipartite quantum state $\rho_{AB}$. Each party can choose among different quantum measurements $\{M_A^{a|x}\}$ and $\{M_B^{b|y}\}$---labeled by classical values $x$ for Alice and $y$ for Bob---with which to act upon their subsystems, yielding classical outcomes $a$ and $b$, respectively. The statistics of such an experiment are described by a set of conditional joint probability distributions $\{p(ab|xy)\}$ given by
\begin{align}
    p(ab|xy) = \Tr[(M_A^{a|x} \ot M_B^{b|y})\, \rho_{AB}].
\end{align}

A state $\rho_{AB}$ is called \textit{Bell local} when, for all possible local quantum measurements $\{M_A^{a|x}\}$ and $\{M_B^{b|y}\}$, the corresponding distributions $\{p(ab|xy)\}$ can be described by a local hidden variable (LHV) model, according to
\begin{align}
    \label{eq:local}
    p(ab|xy) = \sum_\lambda \pi(\lambda) \, p_A(a|x,\lambda) \, p_B(b|y, \lambda),
\end{align}
for a shared LHV described by a distribution $\{\pi(\lambda)\}$ and local classical strategies $\{p_A(a|x,\lambda)\}$ and $\{p_B(b|y,\lambda)\}$. On the contrary, if there exist measurements for which the resulting distribution $\{p(ab|xy)\}$ does not admit a LHV model, then the state $\rho_{AB}$ is deemed \textit{Bell nonlocal}. While any Bell-nonlocal state must be entangled, there exist entangled states that are Bell local~\cite{werner1989quantum,Barrett_2002}.

To demonstrate the nonlocality of a given distribution $\{p(ab|xy)\}$, we use Bell inequalities. These are linear functionals of the probabilities of the form
\begin{align}
    \label{eq:Bell_ineq}
    S \coloneqq \sum_{abxy} B_{xy}^{ab} \, p(ab|xy) \leq S_l,
\end{align}
where $\{B_{xy}^{ab}\}$ are real coefficients, and $S_l$ denotes the local bound, i.e., the maximum value of the Bell parameter $S$ for any distribution admitting a LHV model. 

Remarkably, it has been shown that the nonlocality can be activated when the Bell test is performed on multiple copies of certain Bell-local states. Namely, Ref.~\cite{palazuelos2012superactivation} demonstrated the existence of states $\rho_{AB}$ that are Bell local, whereas $\rho_{AB}^{\otimes n}$ is Bell nonlocal for some number of copies $n$. This effect is termed many-copy activation of Bell nonlocality. 

\section{Catalytic activation of Bell nonlocality}

In this work, we propose and investigate a new form of activation of quantum Bell nonlocality, based on the effect of quantum catalysis. We consider the scenario depicted in Fig.~\ref{fig:1}(b), where Alice and Bob share a Bell-local state $\rho_{AB}$, as well as an entangled catalyst state $\omega_{C_AC_B}$. Each party acts jointly on their local subsystems (i.e. Alice acts jointly on subsystems $AC_A\mapsto A'C_A$ and Bob on $BC_B\mapsto B'C_B$), transforming their global state $\rho_{AB}\otimes\omega_{C_AC_B}\mapsto\tau_{A'B'C_AC_B}$. This transformation is termed catalytic if the catalyst is returned unchanged, namely, if the marginal state of the catalyst fulfills
\begin{align}
    \tr_{A'B'}(\tau_{A'B'C_A C_B})=\omega_{C_A C_B} \,.
\end{align}
Let us now focus on the final state of the system, given by the marginal state 
\begin{align}
    \tau_{A'B'}= \tr_{C_A C_B}(\tau_{A'B'C_A C_B}) \,.
\end{align} 
Interestingly, as we will see below, it is possible to obtain a final state that is Bell nonlocal. Hence, we call this procedure catalytic activation of Bell nonlocality. 

Importantly, the protocol is deterministic, and the parties perform only local operations, with no need for classical communication (or even shared randomness). Moreover, since this pre-processing of the state is performed before the Bell test, no loophole is opened. At the end of the transformation, the catalyst is classically correlated with the original system, i.e., $\tau_{A'B'C_AC_B} \neq \tau_{A'B'} \otimes \omega_{C_A C_B}$. 

Below, we show that such a catalytic activation of Bell nonlocality is possible. In particular, we show that when a state can be activated with many copies, it can also be catalytically activated.
Moreover, we show that catalytic activation of Bell nonlocality is possible using a catalyst that is itself prepared in a Bell-local state. 

\begin{theorem}\label{thm1}
    Let $\rho_{AB}$ be a Bell-local entangled state, such that $n$ copies exhibit Bell nonlocality, i.e., such that $\rho_{AB}^{\otimes n}$ violates a Bell inequality for some finite $n$. Then, one copy of $\rho_{AB}$ exhibits catalytic Bell nonlocality activation, with the state of the catalyst being given by 
    \begin{align}\label{eq: catalyst}
        \omega_{C_A C_B} := \frac{1}{n}  \sum_{i = 0}^{n-1} \rho_{AB}^{\otimes i} \ot \sigma_{AB}^{\otimes (n-1-i)} \ot[ii]_{\widetilde R_A \widetilde R_B} \, ,
    \end{align}
    where $[ii]_{\widetilde R_A \widetilde R_B} \equiv \dyad{i}_{\widetilde R_A} \otimes \dyad{i}_{\widetilde R_B}$ denotes the state of a shared classical register and $\sigma_{AB}=\sigma_A \otimes \sigma_B$ is an arbitrary product state.
    
    In other words, there exist catalytic local transformations for Alice and Bob that deterministically map $\rho_{AB} \otimes \omega_{C_A C_B}$ to a final state $\tau_{A'B'C_AC_B}$ such that the marginal state of the system $\tau_{A'B'}= \tr_{C_A C_B}(\tau_{A'B'C_A C_B})$ is Bell nonlocal, while the marginal state of the catalyst satisfies the catalytic condition $\tr_{A'B'}(\tau_{A'B'C_A C_B})=\omega_{C_A C_B}$. 
\end{theorem}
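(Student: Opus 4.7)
The plan is to reduce catalytic activation to the assumed many-copy activation via a register-conditioned cyclic protocol. First, I expand the combined state as
\begin{align}
\rho_{AB} \otimes \omega_{C_AC_B} = \frac{1}{n} \sum_{i=0}^{n-1} \rho_{AB}^{\otimes (i+1)} \otimes \sigma_{AB}^{\otimes (n-1-i)} \otimes \dyad{i}_{\widetilde R_A} \otimes \dyad{i}_{\widetilde R_B},
\end{align}
viewing the $n$ tensor factors of $\rho^{\otimes(i+1)} \otimes \sigma^{\otimes(n-1-i)}$ as $n$ ``slots'' of $AB$ (one from the input, $n-1$ from the catalyst). Crucially, the branch $i = n-1$ contains exactly $\rho_{AB}^{\otimes n}$, which is Bell nonlocal by hypothesis.

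Next, I construct the protocol. Upon reading the classical register value $i$ (which both parties can locally access since the register is classical and shared), Alice and Bob apply a register-conditioned local operation. For $i < n-1$, they perform a local swap of the input slot with the $(i{+}1)$-th catalyst slot; this moves the input $\rho$ into the catalyst and extracts $\sigma$ as the output, producing a catalyst configuration matching the register-$(i+1)$ pattern. For $i = n-1$, all $n$ slots hold $\rho$ and Alice and Bob apply a joint local operation of the form $\Phi_A \otimes \Phi_B$ on $A^n \otimes B^n$ that extracts the nonlocality of $\rho_{AB}^{\otimes n}$ into the output system while simultaneously resetting the $n-1$ catalyst slots to $\sigma^{\otimes(n-1)}$. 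The register is incremented modulo $n$, so all $n$ branches cycle through.

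The catalyticity condition $\tr_{A'B'}(\tau_{A'B'C_AC_B}) = \omega_{C_AC_B}$ then follows by direct computation: the $i < n-1$ branches contribute (via register increment $i \to i+1$) the $j \geq 1$ terms in $\omega_{C_AC_B}$, while the $i = n-1$ branch contributes (via register reset $n-1 \to 0$) the $j=0$ term. Nonlocality of the output $\tau_{A'B'}$ comes from the $1/n$-weighted branch: taking $A'B' = A^n B^n$ (with the single-slot outputs from $i < n-1$ embedded by tensoring with ancillas) and applying a Bell inequality witnessing the nonlocality of $\rho_{AB}^{\otimes n}$, this branch contributes a value above the local bound $S_l$; the remaining $(n-1)/n$-weighted branches yield Bell-local distributions bounded by $S_l$. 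For a suitable choice of $\sigma_{AB}$ such that the local branches (nearly) saturate $S_l$, the overall Bell parameter of $\tau_{A'B'}$ strictly exceeds $S_l$.

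The principal obstacle is the $i = n-1$ step: one must simultaneously (i) extract the many-copy nonlocality of $\rho^{\otimes n}_{AB}$ into the output, (ii) deterministically prepare $\sigma^{\otimes(n-1)}$ on the catalyst slots via local operations alone, and (iii) ensure that the resulting output mixture is Bell nonlocal under a single choice of inequality. This requires careful coordination between the choice of $\sigma_{AB}$ (which must be locally preparable, e.g.\ as a separable state saturating the local bound of the chosen Bell inequality), the many-copy local operation $\Phi_A \otimes \Phi_B$, and the witnessing measurement scheme.
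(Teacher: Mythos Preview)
Your cyclic register-conditioned protocol is exactly the paper's construction, and your verification of the catalytic condition is correct. However, the ``principal obstacle'' you flag at the end is real, and your proposed resolution does not close it.

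First, the $i=n-1$ step is simpler than you make it. There is no need for a nontrivial $\Phi_A\otimes\Phi_B$ that ``extracts'' nonlocality while ``resetting'' the catalyst. One simply swaps all $n$ copies of $\rho_{AB}$ out to the output registers $A_1B_1\dots A_nB_n$ unchanged, and then, because $\sigma_{AB}=\sigma_A\otimes\sigma_B$ is taken to be a \emph{product} state, each party locally prepares fresh copies $\sigma^{\otimes(n-1)}$ on the catalyst slots. (Separable but non-product $\sigma_{AB}$ is not locally preparable without shared randomness, so your ``e.g.\ separable'' is too permissive.)

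Second, and more importantly, your argument for Bell nonlocality of $\tau_{A'B'}$ has a gap. You note correctly that the $(n-1)/n$-weighted branches are Bell local, hence their score under \emph{any} fixed measurements is at most $S_l$. But ``at most $S_l$'' is not enough: if those branches score strictly below $S_l$ under the measurements optimal for $\rho^{\otimes n}$, the $\tfrac{1}{n}$-weighted violation $\Delta$ may be swamped. Your fix---choosing $\sigma_{AB}$ so that the local branches nearly saturate $S_l$ under the \emph{same} measurements---is not generally achievable, since it would require $\sigma_A$ and $\sigma_B$ to be simultaneous eigenstates of all the local observables. The paper resolves this by enlarging the output with an additional locally-set classical flag $R_AR_B$: the $i=n-1$ branch carries $[00]_{R_AR_B}$ and all other branches carry $[11]_{R_AR_B}$, yielding
\[
\tau_{A'B'}=\tfrac{1}{n}\,\rho_{AB}^{\otimes n}\otimes[00]_{R_AR_B}+\tfrac{n-1}{n}\,\sigma_{AB}^{\otimes n}\otimes[11]_{R_AR_B}.
\]
The parties first read $R_AR_B$; on $[00]$ they run the optimal measurements for $\rho^{\otimes n}$, and on $[11]$ they ignore the quantum systems entirely and output the deterministic local assignment that \emph{exactly} saturates $S_l$. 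The resulting score is $S_l+\tfrac{1}{n}\Delta>S_l$ for any product $\sigma_{AB}$, with no tuning required. Adding this output flag is the missing ingredient in your argument.
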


We now present the proof of Theorem~\ref{thm1}, which works in two steps. First, we discuss the transformation of the initial state $\rho_{AB} \otimes \omega_{C_A C_B}$ via local operations. Specifically, we prove the following result. 

\begin{lemma} \label{lem: cata}
    By means of local operations any state $\rho_{AB}$ can be catalytically transformed to 
    \begin{equation} \label{eq: out catalyst n}
        \tau_{A' B'} =  \frac{1}{n}\, \rho_{AB}^{\otimes n} \otimes [00]_{R_A R_B} + \frac{n-1}{n}\, \sigma_{AB}^{\otimes n} \otimes [11]_{R_A R_B}
    \end{equation}
    where $A' \equiv A_1\dots A_n R_A$ and $B' \equiv B_1\dots B_n R_B$ are composed of $n\in \mathbb{N}$ copies of the system $A$ and $B$ and classical bits $R_A$ and $R_B$, and where $\sigma_{AB}=\sigma_A \otimes \sigma_B$ is an arbitrary product state. The state of the catalyst is given by \eqref{eq: catalyst}.
\end{lemma}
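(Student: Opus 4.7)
The plan is to construct the catalytic map by an explicit local protocol. Expanding the input using the definition of $\omega_{C_AC_B}$ in \eqref{eq: catalyst}, we obtain
\begin{equation*}
    \rho_{AB}\otimes \omega_{C_AC_B}=\frac{1}{n}\sum_{i=0}^{n-1}\rho_{AB}^{\otimes(i+1)}\otimes\sigma_{AB}^{\otimes(n-1-i)}\otimes [ii]_{\widetilde R_A\widetilde R_B}.
\end{equation*}
Because $\sigma_{AB}=\sigma_A\otimes\sigma_B$ is a product state, Alice and Bob can each locally prepare $n-1$ additional copies of their half of $\sigma_{AB}$ at the beginning of the protocol; in branch $i$ the two parties then jointly hold $i+1$ copies of $\rho_{AB}$ and $2n-2-i$ copies of $\sigma_{AB}$. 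The shared classical register $\widetilde R_A\widetilde R_B$ supplies a common index $i$ that each party can read locally, providing all the coordination they need without communicating.

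The heart of the protocol is a conditional permutation $\Pi_i$ acting locally on each side. When the register reads $i=n-1$, both parties route all $n$ copies of $\rho_{AB}$ to the output system $A'B'$ and use the $n-1$ freshly prepared $\sigma$-pairs to form the new catalyst slots. When $i<n-1$, they instead route $n$ of the $\sigma$-pairs to $A'B'$ and push the $i+1$ copies of $\rho_{AB}$ together with the remaining $n-2-i$ $\sigma$-pairs into the catalyst slots. Each party then prepares their halves of two classical output registers in the values $s(i):=\delta_{i,n-1}$ and $j(i):=(i+1)\bmod n$, which become $R_AR_B$ and the register inside $C_AC_B$, respectively. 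Since the control of $\Pi_i$ and of the register preparation is a copy of the classical flag that each party reads locally, the overall channel factorises as $\Lambda_A\otimes\Lambda_B$.

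It remains to verify the two marginal conditions. Averaging over $i$ and tracing out the catalyst collapses the output system onto $\frac{1}{n}\rho_{AB}^{\otimes n}\otimes[00]_{R_AR_B}+\frac{n-1}{n}\sigma_{AB}^{\otimes n}\otimes[11]_{R_AR_B}$, which is exactly $\tau_{A'B'}$ as stated in \eqref{eq: out catalyst n}. For the catalyst, branch $i=n-1$ contributes $\sigma_{AB}^{\otimes(n-1)}\otimes[00]$, while branch $i\in\{0,\ldots,n-2\}$ contributes $\rho_{AB}^{\otimes(i+1)}\otimes\sigma_{AB}^{\otimes(n-2-i)}\otimes[(i+1)(i+1)]$; relabelling $j=i+1$ and absorbing the $j=0$ piece shows that the catalyst marginal equals $\omega_{C_AC_B}$, establishing the catalytic condition.

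The only subtle point I anticipate is the bookkeeping: one has to check that $i\mapsto(s(i),j(i))$ is a bijection on $\{0,\ldots,n-1\}$ so that the register in the output catalyst is again uniformly distributed, and that the $\rho$/$\sigma$ counts balance branch by branch so that the local permutations $\Pi_i$ genuinely exist as permutations of subsystems on each side. The choice $j(i)=(i+1)\bmod n$ is essentially forced by these two requirements; once it is in place, the remainder is a direct computation.
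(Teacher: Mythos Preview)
Your proposal is correct and follows essentially the same strategy as the paper: read the shared classical register, conditionally route copies of $\rho$ and locally prepared copies of $\sigma$ into the output and catalyst slots, and cycle the catalyst register by $i\mapsto (i+1)\bmod n$. One small slip: your flag $s(i)=\delta_{i,n-1}$ assigns register value $1$ to the $\rho^{\otimes n}$ branch and $0$ to the $\sigma^{\otimes n}$ branches, the opposite of what you write in the verification step and of what \eqref{eq: out catalyst n} requires; you want $s(i)=1-\delta_{i,n-1}$.
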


The proof of Lemma~\ref{lem: cata} is detailed below. It follows the main idea of a construction in Ref.~\cite{duan2005multiple}, adapted here to the context of Bell nonlocality. Ref.~\cite{duan2005multiple} investigated the connection between multicopy and catalytic transformations in entanglement theory (see also, e.g., Refs.~\cite{PhysRevLett.127.080502, char2023catalytic,Ganardi_2024, wilming2021entropy}). Our construction is in fact also relevant in this context. Notice that $\tau_{A' B'}$ cannot be prepared from $\rho_{AB}$ by means of stochastic local operations and classical communication in general (take, e.g., $\rho_{AB}$ to be maximally entangled). This shows that the addition of catalysts significantly boosts the power of local state transformations. These questions will be further explored in a forthcoming paper~\cite{inprep}. 

\begin{proof}
For clarity, we present here the proof for the simplest case of $n=2$, meaning that $\rho_{AB}$ is Bell local and $\rho_{AB}^{\otimes 2}$ is Bell nonlocal; the general proof is in Appendix~\ref{app:proof}. In the following, the collective catalyst systems are denoted as $C_AC_B$, with individual subsystems denoted by tilde variables, i.e., $C_A\equiv\tilde{A}\tilde{R}_A$, where $\tilde{A}$ is a quantum system and $\tilde{R}_A$ is a classical register; and equivalently for subscript $B$. The initial collective target systems are denoted as $AB$. The collective output target systems, after the local transformation, are denoted as $A'B'$, with individual subsystems denoted as $A'\equiv A_1 A_2 R_A$, where $A_1$ and $A_2$ are copies of quantum subsystem $A$, and $R_A$ is a classical register; and equivalently for subscript $B$.

The initial state of the system and catalyst is given by
\begin{align}
\begin{split}\label{eq:initialstate}
    \rho_{A B} \otimes \omega_{C_AC_B} = \frac{1}{2}\big(& \rho_{A B}\otimes\sigma_{\widetilde{A}\widetilde{B}} \otimes [00]_{\widetilde{R}_A \widetilde{R}_B} \\
    &+ \rho_{A B}\otimes\rho_{\widetilde{A}\widetilde{B}} \otimes [11]_{\widetilde{R}_A \widetilde{R}_B} \big).
\end{split}
\end{align}
Alice and Bob then implement local transformations that map their received local systems $A C_A \mapsto A' C_A$ and $B C_B \mapsto B' C_B$. Hence, their local operations enlarge their system dimensions. By measuring their copy of the classical bits $\widetilde{R}_A\widetilde{R}_B$, both parties learn in which branch of the mixture in Eq.~\eqref{eq:initialstate} they are. 
If the bit value is $0$, they (i) swap systems $A$ with $\widetilde{A}$, and $B$ with $\widetilde{B}$, preparing the catalyst on state $\rho_{\widetilde{A}\widetilde{B}}$ with classical register flipped to $[11]_{\widetilde{R}_A\widetilde{R}_B}$; and (ii) locally prepare output system $A'B'$ in the product state $\sigma_{A_1B_1}\otimes\sigma_{A_2B_2}$ and set the new classical register to $[11]_{R_A R_B}$. 
On the other hand, if the bit's value is $1$, they (i) prepare the output systems $A'B'$ in the state $\rho_{A_1B_1}\ot\rho_{A_2B_2}$, using $\rho_{AB}\otimes\rho_{\widetilde{A}\widetilde{B}}$, and prepare the new classical register on value $[00]_{R_AR_B}$; and (ii) locally prepare a new catalyst system in the product state $\sigma_{\widetilde{A}\widetilde{B}}$ with classical register flipped to value $[00]_{\widetilde{R}_A\widetilde{R}_B}$.

The final state of their joint system is given by
\begin{align}
\begin{split}
     \tau_{A'B'C_AC_B} &= \\ \frac{1}{2}\big(\rho_{A_1B_1}&\otimes\rho_{A_2B_2} \otimes [00]_{R_AR_B} \otimes  \sigma_{\widetilde{A}
     \widetilde{B}} \otimes [00]_{\widetilde{R}_A \widetilde{R}_B} \\
     + \, \sigma_{A_1B_1}&\otimes\sigma_{A_2B_2} \otimes [11]_{R_AR_B} \otimes \rho_{\widetilde{A}
     \widetilde{B}} \otimes [11]_{\widetilde{R}_A \widetilde{R}_B}\big).
\end{split}
\end{align}
We can check that the state of the catalyst, given by the marginal state $\tau_{C_A C_B}\coloneqq\tr_{A'B'}(\tau_{A'B'C_A C_B})=\omega_{C_A C_B}$, remains unchanged. 
This ensures that the transformation is indeed catalytic.
The catalyst has therefore \textit{not} been consumed, and can in principle be used again in a future transformation.

At the same time, the system is now described by the marginal state $\tau_{A'B'}\coloneqq\tr_{C_A C_B}(\tau_{A'B'C_A C_B})$ given by
\begin{align}\label{eq: two copy cata}
    \tau_{A'B'} = \frac{1}{2} \rho_{AB}^{\ot 2} \ot [00]_{R_{A} R_B} + \frac{1}{2} \sigma_{AB}^{\ot 2} \ot [11]_{R_A R_B} \, ,
\end{align}
which corresponds to the state in \eqref{eq: out catalyst n} for $n=2$. 
\end{proof}

Now, to complete the proof of the theorem, we must show that the final state of the system, given by \eqref{eq: out catalyst n}, is Bell nonlocal. To do so, we use the following result.

\begin{lemma}\label{lem: bell}
    Consider a state $\xi_{AB}$ that violates a given Bell inequality. Then the state
    \begin{equation}
      \tau = p \, \xi_{AB}\ot [00]_{R_A R_B} + (1-p) \, \sigma_{AB}\ot [11]_{R_A R_B},
    \end{equation}
   also violates the same Bell inequality, where $p>0$ and $\sigma_{AB}$ is an arbitrary quantum state.
\end{lemma}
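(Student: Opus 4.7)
The plan is to exploit the fact that the classical register $R_AR_B$ is perfectly correlated between Alice and Bob: by measuring it locally, each party learns which branch of the mixture they are in and can adapt its action accordingly. First, I would take the measurements $\{N_A^{a|x}\}, \{N_B^{b|y}\}$ on $AB$ that witness the violation by $\xi_{AB}$, so that the corresponding Bell value is $S(\xi_{AB}) = S_l+\epsilon$ for some $\epsilon>0$. Next, I would pick a deterministic local strategy $a(x), b(y)$ saturating the local bound, $\sum_{xy} B_{xy}^{a(x)\,b(y)} = S_l$; such a strategy exists because the local polytope is the convex hull of deterministic assignments, so the linear functional $S$ attains its maximum over LHV distributions at a vertex.

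I would then define local POVMs on the enlarged systems $AR_A$ and $BR_B$ that first read the register and then act conditionally, namely
\[
    \widetilde{M}_A^{a|x} = N_A^{a|x}\ot \dyad{0}_{R_A} + \delta_{a,a(x)}\, \id_A \ot \dyad{1}_{R_A},
\]
and analogously for Bob with $N_B^{b|y}$ and $b(y)$. Summing over $a$ (resp.\ $b$) gives $\id_{AR_A}$ (resp.\ $\id_{BR_B}$), so these are valid measurements, and they clearly act locally on each side. A direct evaluation of $\Tr[(\widetilde{M}_A^{a|x}\ot\widetilde{M}_B^{b|y})\,\tau]$ yields
\[
    p(ab|xy) = p\,\Tr[(N_A^{a|x}\ot N_B^{b|y})\xi_{AB}] + (1-p)\,\delta_{a,a(x)}\,\delta_{b,b(y)},
\]
so that by linearity of the Bell functional
\[
    S(\tau) = p\,S(\xi_{AB}) + (1-p)\,S_l = S_l + p\,\epsilon > S_l,
\]
which is the claimed violation.

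I do not foresee any genuine obstacle: the content of the lemma is really the observation that a shared classical flag lets the parties locally route themselves between a nonlocal strategy on the $\xi_{AB}$ branch and an optimal LHV strategy on the $\sigma_{AB}$ branch, with the latter contributing exactly the local bound. The one step worth double-checking is that $S_l$ is saturated by some deterministic local assignment, which is standard via the polytope structure of the local set. Note also that the argument does not depend on the particular state $\sigma_{AB}$, consistent with it being arbitrary in the statement.
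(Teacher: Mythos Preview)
Your proof is correct and follows essentially the same approach as the paper: read the classical flag, apply the violating measurements on the $\xi_{AB}$ branch and a deterministic strategy saturating $S_l$ on the $\sigma_{AB}$ branch, and conclude $S(\tau)=S_l+p\epsilon>S_l$. The only difference is cosmetic---you write out the conditional POVMs explicitly and justify the existence of a saturating deterministic strategy via the polytope structure, whereas the paper states this more tersely.
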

\begin{proof}
    By assumption there exists a Bell inequality, of the form of Eq.~\eqref{eq:Bell_ineq} violated by $\xi_{AB}$. That is, there exist local measurements for the parties yielding the Bell score $S_l +\Delta >S_l$. To prove nonlocality of the state $\tau$, consider the following local measurements. The parties first read out the correlated states of the classical registers $[ii]_{R_AR_B}$. For $i=0$ they know that the systems $AB$ are in the state $\rho_{AB}$ and perform the measurements leading to the score $S_l +\Delta$. For $i=1$ they perform the local deterministic strategy that saturates the local bound $S_l$. Hence the expected Bell parameter is 
    \begin{equation}\label{eq::score}
        S = S_l + p \, \Delta > S_l,
    \end{equation}
    exceeding the local bound. Note that this proposition can be straightforwardly generalized to multipartite states.
\end{proof}

Setting $\xi_{AB} = \rho_{AB}$, we get from the above proposition that the final state of the system, given by Eq.~\eqref{eq: out catalyst n}, indeed violates a Bell inequality, which completes the proof of Theorem~\ref{thm1}.

More generally, Theorem~\ref{thm1} establishes a clear connection between catalytic activation of Bell nonlocality and many-copy activation of nonlocality~\cite{palazuelos2012superactivation}. We can now exploit further this connection to identify special classes of entangled states for which catalytic Bell nonlocality occurs. In particular, many-copy activation of Bell nonlocality has been shown to be possible for any state $\rho_{AB}$ with singlet fraction $F(\rho_{AB}) > 1/d$, where $d$ is the local dimension ~\cite{cavalcanti2013all}. The singlet fraction $F(\rho_{AB})$~\cite{PhysRevA.60.1888} is the largest overlap of the state $\rho_{AB}$ with a maximally entangled state, as in  
\begin{equation}\label{ent_fra}
   F(\rho_{AB}) \coloneqq \max_{\phi_{AB}}  \bra{\phi_{AB}}\rho_{AB} \ket{\phi_{AB}},
\end{equation}
where the optimization is given over all maximally-entangled states $\ket{\phi_{AB}}$. Hence, from Theorem~\ref{thm1}, we obtain the following result.

\begin{corollary}\label{corrol: singlet fraction}
    Every state $\rho_{AB}$ with local dimension $d$ and entanglement fraction $F(\rho_{AB}) > 1/d$ exhibits catalytic Bell nonlocality. In particular, this includes states $\rho_{AB}$ that are Bell local for arbitrary local measurements. An example is the two-qubit isotropic state
    \begin{align}
        \rho_{AB} = V \ket{\phi^+}\bra{\phi^+} + (1-V) \openone/4
    \end{align}
    for visibilities $1/3 < V \leq 1/2$ (corresponding to singlet fraction $1/2 < F(\rho_{AB}) \leq 5/8$~\cite{werner1989quantum,zhang24exact,renner24compatibility}.
\end{corollary}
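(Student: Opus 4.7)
The plan is to chain Theorem~\ref{thm1} with a known sufficient condition for many-copy activation of Bell nonlocality, and then verify the example by elementary computation plus a citation to existing LHV constructions. Theorem~\ref{thm1} already reduces catalytic activation to $n$-copy activation for some finite $n$, so the task is to identify a broad family of Bell-local states whose tensor powers become Bell nonlocal. For this I would invoke the result of Cavalcanti, Acín, Brunner and Vértesi~\cite{cavalcanti2013all}, which states precisely that every $\rho_{AB}$ of local dimension $d$ with $F(\rho_{AB}) > 1/d$ admits some $n$ for which $\rho_{AB}^{\otimes n}$ violates a Bell inequality. Substituting this into Theorem~\ref{thm1} delivers the general statement of the corollary directly, without any additional construction.

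For the isotropic-state example, my plan is to verify the two ingredients separately. The singlet fraction follows from a direct overlap computation, $F(\rho_{AB}) = \bra{\phi^+}\rho_{AB}\ket{\phi^+} = V + (1-V)/4 = (3V+1)/4$. The condition $F > 1/d = 1/2$ for qubits is thus equivalent to $V > 1/3$, and $V = 1/2$ saturates $F = 5/8$, reproducing the stated interval. To conclude that these states genuinely illustrate the corollary, I would then cite the LHV constructions of Refs.~\cite{werner1989quantum,zhang24exact,renner24compatibility}, which establish that the two-qubit isotropic state is Bell local for arbitrary local measurements in the regime $V \leq 1/2$. Intersecting with $V > 1/3$ gives a nonempty interval of Bell-local states that meet the hypothesis of the general statement, so catalytic activation applies to them via the first part of the corollary.

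I do not anticipate a substantive obstacle here: both Theorem~\ref{thm1} and the result of~\cite{cavalcanti2013all} are used as black boxes, and the singlet-fraction computation for the isotropic state is a one-line calculation. The one point that warrants some care is the Bell-locality statement at $V \le 1/2$ under \emph{all} local measurements, since Werner's original model only covers projective measurements; this is why the corollary cites the more recent works~\cite{zhang24exact,renner24compatibility} that extend the LHV model to general POVMs at this visibility. Once that input is accepted, the corollary is immediate.
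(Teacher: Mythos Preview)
Your proposal is correct and matches the paper's own argument essentially line for line: the paper also derives the corollary by combining Theorem~\ref{thm1} with the many-copy activation result of~\cite{cavalcanti2013all}, and invokes the same references~\cite{werner1989quantum,zhang24exact,renner24compatibility} for Bell locality of the isotropic state in the stated visibility range. Your added explicit computation of $F(\rho_{AB})=(3V+1)/4$ and your remark about POVMs versus projective measurements are helpful elaborations, but the overall route is the same.
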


\section{Activation with a Bell-local catalyst}

An intriguing feature of our protocol is that, in certain cases, Bell nonlocality can be catalytically activated using a catalyst that is itself prepared in a Bell-local state. In this sense, we can have catalytic superactivation of Bell nonlocality, since we start from two states $\rho_{AB}$ and $\omega_{C_A C_B}$ that are both Bell local. 

To see this, consider the following argument. Given a Bell-local state $\rho_{AB}$, take the smallest number of copies $n$ such that many-copy activation Bell nonlocality is possible. That is, such that $\rho_{AB}^{\otimes n}$ violates a Bell inequality but $\rho_{AB}^{\otimes m}$ with $m<n$ does not. Applying Theorem~\ref{thm1} to this case, we obtain catalytic activation of Bell nonlocality for $\rho_{AB}$ using a catalyst in the state in Eq.~\eqref{eq: catalyst} which is also Bell local; this is because the catalyst is a mixture containing terms of the form $\rho_{AB}^{\otimes m}$ with $m<n$, which are all Bell local by hypothesis, and terms with product states.

This effect is particularly interesting from the perspective of quantum catalysis. To the best of our knowledge, all known examples of catalysis in quantum information, where one seeks to activate a certain resource catalytically, require that the catalyst already possesses the resource in order to activate it in the main system. In contrast, we have shown that Bell nonlocality (the resource here) can be activated using a catalyst that is useless (here Bell local).

\section{Catalytic activation of CHSH}

We can also consider the problem of activating the nonlocality of a quantum state for a specific Bell inequality. Here we discuss the Clauser-Horne-Shimony-Holt Bell inequality \cite{chsh}, which is the most commonly used test of nonlocality and central to many device-independent protocols for randomness certification~\cite{Pironio2010,colbeck2011} and quantum key distribution~\cite{Acin2007,ArnonFriedman2018,Nadlinger2022}.

Catalytic activation of CHSH violation can be demonstrated applying our main theorem to the two-copy activation result of Ref.~\cite{Navascues2011}. This work presents states $\rho_{AB}$ (of local dimensions $d=8$) which do not violate the CHSH Bell inequality (yet it is not clear if this state is Bell local or not), while two copies $\rho_{AB}^{\ot 2}$ lead to violation of CHSH. Note that in the catalytic activation process, the catalyst contains a single copy of $\rho_{AB}$, hence this state cannot violate CHSH.

\section{Other forms of catalytic activation of Bell nonlocality}

So far, we have presented a protocol for catalytic activation of Bell nonlocality, which consists of a local catalytic transformation of the entangled state followed by the Bell test. Alternatively, one could consider a scenario where the catalyst must only be returned after the local measurements of Alice and Bob, as depicted in Fig.~\ref{fig:1}(c). This more general scenario could, in principle, provide more possibilities for catalytically activating Bell nonlocality.

In this case, the local measurements of Alice and Bob must be described by quantum instruments 
$\{\mathcal{I}^{a|x}_{AC_A}\}$ and $\{\mathcal{I}^{b|y}_{BC_B}\}$, returning not only a classical outcome, but also a quantum system, from which the catalyst should eventually be recovered. For a given pair of $x$ and $y$, the instruments output the global classical-quantum state
\begin{align}\label{eq: cat variant 2}
    \tau_{O_A O_B C_A C_B}^{(x,y)} &=
    \sum_{a,b} p(ab|xy)\,\,  [ab]_{O_A O_B} \otimes \omega_{C_AC_B}^{(a,b,x,y)},
\end{align}
where $ p(ab|xy)\, \omega_{C_AC_B}^{(a,b,x,y)} =(\mathcal{I}^{a|x}_{AC_A} \ot \mathcal{I}^{b|y}_{BC_B})[\rho_{AB} \ot \omega_{C_AC_B}]$, and $O_A$ with $O_B$ are registers storing the classical outputs $a$ and $b$. Starting from a Bell-local state $\rho_{AB}$ we then have catalytic activation of nonlocality if the produced correlations $p(a,b|x,y)$ are nonlocal, while the state of the catalyst is preserved. There are now several variants on how to formalize this requirement, depending on whether the marginal state of the catalyst must be unchanged for all classical inputs and outputs, only for all classical inputs, or simply on average. In App.~\ref{app: intrumental} we discuss all these variants in detail, and compare them to catalytic activation of nonlocality via state transformation as discussed in the previous sections. There we also comment on the role of shared randomness. 

Let us also note that Ref.~\cite{Karvonen2021} considered yet a different approach to catalysis in the context of Bell nonlocality, using local wirings of correlations, rather than local operations on quantum states as we do here. They showed that in this formulation, catalytic activation of nonlocality is not possible.

\section{Discussion and Outlook}

We have uncovered a novel mechanism for activating Bell nonlocality based on quantum catalysis. By combining a Bell-local entangled state with a catalyst, we demonstrate that Bell nonlocality can be activated, while the state of the catalyst remains unchanged. 

A striking aspect of this process is that it does not require the catalyst to be prepared in a Bell-nonlocal state. Indeed, we showed that using a Bell-local state as a catalyst can lead to catalytic activation of Bell nonlocality. This point is also relevant from the perspective of quantum resource theories, where catalysis is usually possible only when the catalyst already possesses the resource to be activated in the system. 

A relevant open question is to understand for which states can Bell nonlocality be catalytically activated, and in particular, if this would be possible for every entangled state. Here we have identified large classes of entangled states for which this is possible, drawing on a connection to many-copy activation. As a first step, it would be interesting to find further examples of catalytic activation of Bell nonlocality that do not rely on many-copy activation. Another relevant question is to understand how catalytic activation of Bell nonlocality relates to other forms of nonlocality activation. At this point, we can see that this effect is different from hidden nonlocality. Indeed, we have shown that every entangled isotropic state can be catalytically activated, while some of these state remain Bell local even after local filtering~\cite{Hirsch2016}.

Finally, it is worth noting that some of our results, in particular Lemma~\ref{lem: cata}, are of independent interest and can be used to demonstrate catalytic activation in different contexts. In particular, any property of a quantum state that is absent in a single copy, but that can be activated with many copies, could also be activated catalytically according to Lemma~\ref{lem: cata}, as long $\tau_{A'B'}$ in Eq.~\eqref{eq: out catalyst n} also carries this property. For example, it has been shown that metrological advantage of certain entangled states can be activated in the many-copy regime~\cite{Toth2020,Trenyi2024}. From our results, it follows that catalytic activation of metrological advantage is also possible. It would also be interesting to investigate possible applications for device-independent protocols.
\\

\noindent\textit{Acknowledgments.---}
We thank Marco T\'ulio Quintino, Tam\'as V\'ertesi, and Paul Skrzypczyk for discussions. The authors acknowledge funding from the Swiss National Science Foundation (SNSF) through NCCR SwissMAP (project~182902), project~192244, and the Swiss Postdoctoral Fellowship (project~216979). P.L.-B. also acknowledges funding from Polish National Agency for Academic Exchange (NAWA) through grant BPN/PPO/2023/1/00018/U/00001.

\bibliography{catalysis}

\appendix
\onecolumngrid

\hspace*{2cm}

\section*{APPENDIX}

\section{Proof of Lemma~\ref{lem: cata}}\label{app:proof}

Here we restate Lemma~\ref{lem: cata} from the main text and present its proof for the general case.

\setcounter{lemma}{0}

\begin{lemma}
    By means of local operations any state $\rho_{AB}$ can be catalytically transformed to 
    \begin{equation}
        \tau_{A' B'} =  \frac{1}{n}\, \rho_{AB}^{\otimes n} \otimes [00]_{R_A R_B} + \frac{n-1}{n}\, \sigma_{AB}^{\otimes n} \otimes [11]_{R_A R_B}
    \end{equation}
    where $A' \equiv A_1\dots A_n R_A$ and $B' \equiv B_1\dots B_n R_B$ are composed of $n\in \mathbb{N}$ copies of the system $A$ and $B$ and classical bits $R_A$ and $R_B$, respectively,
    and where $\sigma_{AB}=\sigma_A \otimes \sigma_B$ is an arbitrary product state.
\end{lemma}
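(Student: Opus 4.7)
The plan is to generalize the $n=2$ construction by using the classical register inside $\omega_{C_A C_B}$ as a shared label $i \in \{0,\dots,n-1\}$ that lets Alice and Bob each perform a branch-dependent local operation, synchronized via the perfect classical correlations in $[ii]_{\widetilde R_A \widetilde R_B}$. Expanding the input gives
\begin{align*}
\rho_{AB} \otimes \omega_{C_A C_B} = \frac{1}{n} \sum_{i=0}^{n-1} \rho_{AB}^{\otimes(i+1)} \otimes \sigma_{AB}^{\otimes(n-1-i)} \otimes [ii]_{\widetilde R_A \widetilde R_B},
\end{align*}
so in branch $i$ each party locally holds $i+1$ copies of their half of $\rho_{AB}$ and $n-1-i$ copies of their half of $\sigma_{AB}$, plus a classical label that can be read out by measuring the register in the computational basis.

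The branch-dependent operation I would design implements a cyclic shift $i \mapsto (i+1) \bmod n$ of the catalyst register. For $i \in \{0,\dots,n-2\}$, the parties route all $i+1$ copies of $\rho_{AB}$ and $n-2-i$ of the $\sigma_{AB}$ copies into the output catalyst, updating the register to $[(i{+}1)(i{+}1)]_{\widetilde R_A \widetilde R_B}$, so that the new catalyst branch $i+1$ is filled exactly. The one leftover copy of $\sigma_{AB}$ is traced out, while Alice and Bob each locally prepare their half of $\sigma_{AB}^{\otimes n} \otimes [11]_{R_A R_B}$ for the output system. For $i = n-1$, the $n$ available copies of $\rho_{AB}$ are routed directly into the output system together with $[00]_{R_A R_B}$, and the parties locally prepare $\sigma_{AB}^{\otimes(n-1)} \otimes [00]_{\widetilde R_A \widetilde R_B}$ for the output catalyst. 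Because $\sigma_{AB}=\sigma_A\otimes\sigma_B$ is product, all the local preparations and discards factorize between Alice and Bob, and because both parties read the same label $i$ from their perfectly correlated classical registers, the whole operation is genuinely a local map on $AC_A$ tensored with a local map on $BC_B$.

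Verification is then a matter of collecting terms. The output catalyst receives branch $(i+1) \bmod n$ from input branch $i$, each with weight $1/n$, so summing over $i$ reproduces $\omega_{C_A C_B}$ exactly, ensuring the catalytic condition. The output system collects weight $(n-1)/n$ on $\sigma_{AB}^{\otimes n} \otimes [11]_{R_A R_B}$ from branches $i=0,\dots,n-2$ and weight $1/n$ on $\rho_{AB}^{\otimes n} \otimes [00]_{R_A R_B}$ from branch $i=n-1$, matching the target state $\tau_{A'B'}$.

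The main obstacle is purely bookkeeping: one must verify in each branch that the counts of $\rho_{AB}$- and $\sigma_{AB}$-subsystems available after the expansion exactly match the demand of the target catalyst branch, with any surplus being a product state that Alice and Bob can freely refresh. No new conceptual ingredient beyond the $n=2$ case is needed; the classical register $[ii]_{\widetilde R_A \widetilde R_B}$ plays exactly the same coordinating role, now across $n$ synchronized branches rather than two.
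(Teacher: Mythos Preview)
Your proposal is correct and follows essentially the same construction as the paper: read the shared classical label $i$, cyclically shift the catalyst register $i\mapsto (i{+}1)\bmod n$, route the $n$ copies of $\rho_{AB}$ to the output with flag $[00]$ in branch $i=n{-}1$, and otherwise feed the extra $\rho_{AB}$ into the catalyst while outputting freshly prepared $\sigma_{AB}^{\otimes n}\otimes[11]$. The only cosmetic difference is that you phrase the branch-$i$ action as ``route and trace out a leftover $\sigma_{AB}$'' whereas the paper phrases it as a swap of the input $\rho_{AB}$ into slot $\widetilde A_{i+1}\widetilde B_{i+1}$; these are equivalent local operations.
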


\begin{proof}
    We start by defining the bipartite catalyst. To do so introduce the systems $C_A \equiv \widetilde A_1 \dots \widetilde A_{n-1} \widetilde R_A$ and 
    $C_B\equiv \widetilde B_1 \dots \widetilde B_{n-1} \widetilde R_B$ consisting of $n-1$ copies of the quantum system $A$ (respectively $B$)  denoted $\widetilde A_i$ (respectively $\widetilde B_i$) as well as a classical register $\widetilde R_A$ (respectively $\widetilde R_B$). Let the state of the catalyst be
    \begin{align}\label{eq: catalyst general}
        \omega_{C_A C_B} := \frac{1}{n}  \sum_{i = 0}^{n-1} (\rho^{\otimes i} \ot \sigma^{\otimes (n-1-i)})_{\widetilde A_1 \widetilde B_1 \dots \widetilde A_{n-1} \widetilde B_{n-1} } \ot[ii]_{\widetilde R_A \widetilde R_B}.
    \end{align}
    where each branch of the mixture the first $i$ pairs of systems $\widetilde A_k \widetilde B_k$ are in the state $\rho_{\widetilde A_k \widetilde B_k}$ (for $1\leq k\leq i$) and the remaining ones are in the state $\sigma_{\widetilde A_k \widetilde B_k}$ (for $i+1\leq k\leq n-1$).

    Now we define the local operations performing the desired transformation on the state $\rho_{AB}\ot \omega_{\widetilde A \widetilde B}$. The first step consists of reading out the value of the correlated registers $\widetilde R_A$ and $\widetilde R_B$, obtaining the values  $i\in\{0,\dots, n-1\}$ with uniform probability $\frac{1}{n}$. 
    
    If the observed value is $i=n-1$, the system $AC_A=A\widetilde A_1\dots \widetilde A_{n-1}$ and $B C_B=B\widetilde B_1\dots \widetilde B_{n-1}$ carry $n$ copies of the input state $\rho_{AB}$. Alice and Bob then locally swap these copies to the output systems $A_1\dots A_n$ and $B_1\dots B_n$ respectively, set the output classical registers $R_A R_B$ to $[00]_{R_A R_B}$ and prepare the catalyst in the product state $(\sigma^{\otimes (n-1)} \ot [00])_{C_A C_B}$ (which can be done locally). For $i=n-1$, the global output state is thus
    \begin{equation}
        \tau_{A' B' C_A C_B}^{(n-1)} = (\rho^{\otimes n} \otimes [00])_{A' B'}\ot (\sigma^{\ot (n-1)} \ot [00])_{C_A C_B}.
    \end{equation}
    
    For all other measured values of the registers $i\leq n-2$, the parties first locally swap the input systems $AB$ received in the state $\rho_{AB}$ in the catalyst (into the systems $\widetilde A_{i+1}\widetilde B_{i+1}$ to be exact) So that it  now carries $i+1\leq n-
    1$ copies of $\rho_{AB}$, and sets the catalyst registers to $[(i+1)(i+1)]_{\widetilde R_A \widetilde R_B}$. Then, they prepare the output systems $A' B' $ in the product state $( \sigma^{\otimes n} \otimes [11])_{A'  B' }$, independent of $i$. Hence, for $i\leq n-2$, the global output state is
    \begin{align}
    \begin{split}
        \tau_{A' B' C_ A C_B}^{(i)} &= (\sigma^{\otimes n} \otimes [11])_{A'B'}\ot  
        (\rho^{\otimes (i+1)} \ot \sigma^{\otimes (n-2-i)} \ot[(i+1)(i+1)])_{C_A C_B}.
    \end{split}    
    \end{align}
    
    The final state resulting from these local operations is the mixture 
    \begin{equation}
    \tau_{A' B' C_A C_B} = \frac{1}{n}\sum_{i=0}^{n-1} \tau_{A' B' C_A C_B}^{(i)}.
    \end{equation} 
    Is is straightforward to see that its marginals satisfy $\tau_{C_A C_B} \coloneqq \tr_{A' B'}(\tau_{A' B' C_A C_B})=\omega_{C_A C_B}$ in Eq.~\eqref{eq: catalyst}, and $\tau_{A' B'}\coloneqq \tr_{C_A C_B}(\tau_{A' B'C_A C_B})$ given by Eq.~\eqref{eq: out catalyst n}, concluding the proof.
\end{proof}

\section{Instrument-based catalytic activation of Bell nonlocality}\label{app: intrumental}

Throughout the main text we consider the scenario for catalytic activation of Bell nonlocality depicted in Fig.~\ref{fig:1}(b). Here, local operations $\mathcal{E}_{AC_A}$ and $\mathcal{E}_{B C_B}$ (CPTP maps) on systems $AC_A$ and $BC_B$ produce the global output state
\begin{align}
    {\bf(b)}: \qquad  &\tau_{A' B' C_A C_B} = (\mathcal{E}_{AC_A}\otimes \mathcal{E}_{B C_B})[\rho_{AB}\ot \omega_{C_A C_B}] \\ &\text{such that}\qquad \nonumber
    \tr_{A' B' }  \tau_{A' B' C_A C_B} = \omega_{C_AC_B}.
\end{align}

We now discuss a different form of catalytic activation of nonlocality, depicted in Fig.~\ref{fig:1}(c), where the catalyst is only required to be returned after the measurement outcomes have been produced.
In this version, the local operations of Alice and Bob are described by quantum instruments 
$\{\mathcal{I}^{a|x}_{AC_A}\}$ and $\{\mathcal{I}^{b|y}_{BC_B}\}$. For the classical inputs $x,y$, the instruments output the global classical-quantum state
\begin{align}\label{eq: cat variant 2new}
{\bf(c)}: \qquad &\tau_{O_A O_B C_A C_B}^{(x,y)} =\sum_{a,b} p(ab|xy)\,\,  [ab]_{O_A O_B} \otimes \omega_{C_AC_B}^{(a,b,x,y)} \quad \text{with} \quad
 p(ab|xy)\, \omega_{C_AC_B}^{(a,b,x,y)} =(\mathcal{I}^{a|x}_{AC_A} \ot \mathcal{I}^{b|y}_{BC_B})[\rho_{AB} \ot \omega_{C_AC_B}] \nonumber \\
 &\text{such that}\quad
  {\bf(c1)},  \,{\bf(c2)}, \, \text{or}  \, {\bf(c3)} \quad \text{hold}, 
\end{align}
where $O_A$ with $O_B$ are classical registers storing the outputs $a$ and $b$. Here we can consider three different variants to impose that this transformation is catalytic, formally given by
\begin{align}
    {\bf(c1)}:& \qquad \omega_{C_AC_B}^{(a,b,x,y)} = \omega_{C_AC_B} \quad  \forall x,y,a,b \quad \Longleftrightarrow \quad \tau_{O_A O_B C_A C_B}^{(x,y)} = \left(\sum_{a,b} p(ab|xy)\,\,  [ab]_{O_A O_B} \right) \otimes \omega_{C_A C_B}, \\
     {\bf(c2)}:& \qquad \sum_{ab} p(ab|xy) \omega_{C_AC_B}^{(a,b,x,y)} = \tr_{O_A O_B} \tau_{O_A O_B C_A C_B}^{(x,y)} = \omega_{C_AC_B} \quad \forall x,y \\
      {\bf(c3)}:&\qquad  \sum_{abxy} p_X(x) p_Y(y) \, p(a,b|x,y)\, \omega_{C_AC_B}^{(a,b,x,y)} = \omega_{C_A C_B},
\end{align}
where $\{p_X(x)\}$ and $\{p_Y(y)\}$ are the probably distributions from which the inputs are sampled by the parties. In words, $\bf (c1)$ requires that the state of the catalyst is unchanged for all inputs and all outputs, implying that the final classical-quantum states are product. $\bf (c2)$ requires that the marginal state of the catalyst (after discarding the output registers) is unchanged for all possible inputs. Arguably, it is the closest of the three to $\bf(b)$. In turn, $\bf (c3)$ requires that the  catalyst is unchanged also when the registers storing the inputs are discarded. This requires the explicit introduction of the probabilities $\{p_X(x)\}$ and $\{p_Y(y)\}$ with which the inputs are sampled.

It is straightforward to see that $\textbf{(c1)}\implies \textbf{(c2)} \implies \textbf{(c3)}$, hence there is a clear hierarchy between these classes of catalytic transformation of the state $\rho_{AB}$:  transformations possible under $\textbf{(c3)}$ include all transformations possible under  $\textbf{(c2)}$ etc. In is also not difficult to see that $\textbf{(b)}\implies \textbf{(c2)} \implies \textbf{(c3)}$, in the sense that any catalytic nonlocality activation scenario $\textbf{(b)}$ can always be interpreted as a particular case of scenario $\textbf{(c2)}$. This is because the local transformations and measurements in $\textbf{(b)}$ can be combined together to define the instruments in $\textbf{(c)}$, where $\textbf{(c2)}$ will be automatically satisfied since the measurement setting $x,y$ are not causally related to the catalyst. In turn, the scenarios $\textbf{(b)}$ and $\textbf{(c1)}$ seem incomparable. It is an open question if nonlocality activation is even possible in $\textbf{(c1)}$.

Finally, one could also consider catalytic activation of nonlocality assisted by shared randomness, which by definition cannot generate nonlocality on its own. Here, in addition to the catalyst, the parties are given access to shared randomness, \textit{which does not have to be returned}, unlike the catalyst. In this setting shared randomness cannot be na\"ively absorbed in the catalyst, since general local operation can create correlations between the randomness resources and the catalyst, thus modifying their global state. Hence, in principle, shared randomness might open more possibilities for catalytic Bell nonlocality activation.

\end{document}